\newcommand{\ignore}[1]{}%
\numberwithin{equation}{section}
\newtheorem{nthm}{Theorem}[]
\newtheorem{nlemma}{Lemma}
\newtheorem{nprop}{Proposition}
\newtheorem{ncor}{Corollary}
\newtheorem{ndefi}{Definition}
\definecolor{shccolor}{rgb}{0.9,0.1,0.1}
\definecolor{vpscolor}{rgb}{0.1,0.4,0.9}
\definecolor{ncccolor}{rgb}{0.4,0.7,0.4}
\definecolor{nbcolor}{rgb}{0.5,0.3,0.4}
\definecolor{updatecolor}{rgb}{1,0,0}
\newcommand{\update}[1]{#1}
\def\@fpheader{ }
\begin{document}

\title{Topological link models of multipartite entanglement}

\author[a]{Ning Bao}
\email{ningbao75@gmail.com}
\author[b]{Newton Cheng}
\email{newtoncheng@berkeley.edu}
\author[c]{Sergio Hern\'andez-Cuenca}
\email{sergiohc@ucsb.edu}
\author[b]{Vincent Paul Su}
\email{vipasu@berkeley.edu}

\affiliation[a]{Computational Science Initiative, Brookhaven National Lab, Upton, NY, 11973, USA}
\affiliation[b]{Center for Theoretical Physics, Department of Physics, University of California, Berkeley, CA 94720, USA}
\affiliation[c]{Department of Physics, University of California, Santa Barbara, CA 93106, USA}

\begin{abstract}
    We introduce a novel model of multipartite entanglement based on topological links, generalizing the graph/hypergraph entropy cone program. We demonstrate that there exist link representations of entropy vectors which provably cannot be represented by graphs or hypergraphs. Furthermore, we show that the contraction map proof method generalizes to the topological setting, though now requiring oracular solutions to well-known but difficult problems in knot theory.
\end{abstract}

\maketitle

\section{Introduction}
Originally motivated by relations obeyed by the entropy function in holography~\cite{Ryu:2006bv},~\cite{Bao:2015bfa,HernandezCuenca:2019wgh,Avis:2021xnz} studied a graph model of entanglement, wherein the von Neumann entropies of quantum subsystems are calculated by minimum cuts separating those subsystems from their complements in a corresponding weighted graph. The combinatorial properties of minimum cuts on graphs allowed for the development of a method to prove entropy inequalities obeyed by holographic states. While powerful, this technology was limited to graph models intended to describe the subset of holographic states, and therefore did not capture the entanglement structure of more general quantum states. In follow-up work, the graph model was generalized to a hypergraph model~\cite{Bao:2020zgx}, with the aim of reproducing the entropies of a wider class of states while retaining the ability to prove entropy inequalities via a generalization of the proof technology. For four parties, this was indeed the case: the hypergraph cone coincided with that of stabilizer states, which is larger than the holographic one of graphs, but still not large enough to capture general quantum states.\footnote{The full four-party quantum entropy cone is unknown~\cite{pippenger2003inequalities}, though it is known to properly contain the stabilizer entropy cone as defined in~\cite{Linden:2013kal}.}

The hypergraph model also provided interesting new results intimately related to the properties of ranks of linear subspaces and the entropies of stabilizer states in quantum mechanics~\cite{Bao:2020zgx}. Further explorations of their connection to the latter revealed that the hypergraph model was able to encode the entropies of a subset of stabilizer states~\cite{Walter:2020zvt,Nezami:2016zni}, and that this containment was proper~\cite{Bao:2020mqq}. The latter result relied on the discovery of a particular stabilizer state whose entropy vector, referred to as ray $15$ in~\cite{Bao:2020mqq}, violates an inequality proven to be valid for all hypergraphs in the hypergraph model.

By contrast, another avenue of study has been the study of topology as an efficient encoding of certain patterns of entanglement. Some work in this direction include~\cite{Salton:2016qpp,Balasubramanian:2016sro,Chun:2017hja,Mironov:2019taf,Kauffman:2019,Aharonov2008APQ}. Because all graph and hypergraph models of entanglement can be captured by topological links (as will be discussed in detail later in this work), using topological links to generalize encodings of entanglement patterns in a systematic way is an intriguing avenue of exploration to describe the entanglement of more general quantum states.

In this paper, we introduce such a generalization of the graph and hypergraph models of quantum entanglement based on topological links. \update{Similar to the (hyper)graph case, entropies will be given by the weights of minimal cuts on links, and used to define the \textit{entropy cone of link models} as the set of all entropy vectors realizable by such models. In Section \ref{sec:prelim}, we review the graph and hypergraph models and introduce the contraction map method that was applied to prove entropy inequalities on such models. In Section \ref{sec:link models}, we then formally introduce link models as a generalization of hypergraph models. We demonstrate that such models strictly contain the hypergraph entropy cone by exhibiting a link whose entropies capture those of ray 15. We then prove a generalized contraction map condition on such models that, subject to access to an oracle whose abilities we carefully define, provides a sufficient condition for the validity of a given entropy inequality on a link model. We also motivate that link models may usefully capture the entropies of quantum states by showing that the entropy cones of link models and quantum states are identical for $n \leq 3$ parties. A more detailed discussion of how link models can be applied to further characterize entanglement structures of quantum states is provided in Section \ref{sec:disc}}.

\section{Preliminaries}\label{sec:prelim}
\subsection{Graph models}
We first review the definition of the graph model and the intuition behind it, as originally introduced and studied in~\cite{Bao:2015bfa}. For any positive integer $k$, let $[k]=\{1,\dots,k\}$.
A graph model consists of an undirected graph $G = (V,E)$ with nonnegative edge weights specified by a map $w : E \to \mathbb{R}_{\ge0}$. A subset of vertices $\partial V\subseteq V$ are called \textit{external vertices} and the rest are referred to as \textit{internal vertices}. For $\partial V$ of cardinality $n+1$, a bijective coloring $\partial V \to [n+1]$ is used to identify internal vertices with partitions of a quantum state: $G$ represents the entropies of a mixed $n$-party state on the vertices $[n]$, with vertex $n+1$ representing the purification of the state, and all internal vertices encoding the entanglement structure. In this setting, there are $2^n-1$ subsystems of the quantum state one can consider, corresponding to every non-empty subset $I\subseteq[n]$. The entropy of any one of these is defined as the weight of the minimum cut or \textit{min-cut} separating vertices $I$ from $\partial V \smallsetminus I$.

Given an arbitrary graph model, one may intuitively expect that the min-cut prescription should place restrictions on the entropies of states that can be faithfully represented by such models. These restrictions turn out to take the form of linear entropy inequalities, which can be canonically written as
\begin{equation}\label{eq:candidate_ineq}
    \sum_{l=1}^L \alpha_l S(I_l) \geq \sum_{r=1}^R \beta_r S(J_r),
\end{equation}
where $\alpha_l,\beta_r > 0$, and $I_l,J_r$ are subsets of external vertices. In other words, we arrange terms into a left-hand side (LHS) and a right-hand side (RHS) such that they all appear with positive coefficients. The entropies of any state which can be represented by a graph model will obey inequalities of the general form of \eqref{eq:candidate_ineq}.

These inequalities on graph models can be proven via a technique known as proof-by-contraction. The method functions by essentially formalizing the intuition that the weights of any clever construction of cuts corresponding to a given choice of external vertices are always lower bounded by the weights of the corresponding min-cut.\footnote{See~\cite{Akers:2021lms} for a description of the contraction-map proof method in the geometric setting that originally inspired it, based on the holographic Ryu-Takayanagi formula~\cite{Ryu:2006bv}.} \update{To give the precise statement of the proof-by-contraction method, let us first introduce some basic notation.
For each $i\in[n+1]$, let $x^{(i)}$ and $y^{(i)}$ respectively be bitstrings of lengths $L$ and $R$, with entries given by $x^{(i)}_l = \delta(i \in I_l)$ and $y^{(i)}_r = \delta(i \in J_r)$, where $\delta$ is a Boolean indicator function yielding 1 or 0 depending on whether its argument is true or false, respectively. These are often referred to as \emph{occurrence bitstrings}, as they simply encode the occurrence of the $i^{\text{th}}$ party in $I_l$ or $J_r$. Given a vector $\gamma \in \mathbb{R}^k$, we also introduce a \textit{weighted Hamming norm} on $\mathbb{R}^k$ as $\norm{x}_{\gamma} = \sum_{j=1}^k\gamma_j|x_j|$. We can now state the proof-by-contraction method, a sufficient condition for the validity of a given entropy inequality~\cite{Bao:2015bfa}}:
\begin{nthm}\label{thm:contraction}
    Let $f:\{0,1\}^L\to\{0,1\}^R$ be a $\norm{\cdot}_{\alpha}$-$\norm{\cdot}_{\beta}$ contraction, i.e.
    \begin{equation}
        \norm{x - x'}_{\alpha} \geq \norm{f(x) - f(x')}_{\beta} \qquad \forall x, x'\in\{0,1\}^L.
    \end{equation}
    If $f\left(x^{(i)}\right) = y^{(i)}$ for all $i=1,\ldots,n+1$, then \eqref{eq:candidate_ineq} is a valid entropy inequality on graphs.
\end{nthm}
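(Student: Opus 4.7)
The plan is to turn the combinatorial contraction $f$ into a geometric recipe that converts any choice of min-cuts computing the LHS entropies into a collection of candidate cuts for the RHS entropies, with the RHS total weight bounded edge-by-edge by the LHS total weight. The key observation is that the weighted Hamming norm $\norm{\cdot}_{\gamma}$ is tailor-made to measure, for a single edge of the graph, how many of the cuts $\partial A_l$ (weighted by $\alpha_l$) separate its two endpoints.

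First, I would fix for each $l\in[L]$ a min-cut side $A_l\subseteq V$ with $I_l\subseteq A_l$ and $([n+1]\smallsetminus I_l)\cap A_l=\emptyset$, so that $S(I_l)=w(\partial A_l)$. To each vertex $v\in V$ I would then attach the occurrence bitstring $x^v\in\{0,1\}^L$ defined by $x^v_l=\delta(v\in A_l)$. By the choice of sides, the external vertex $v=i$ inherits exactly the bitstring $x^{(i)}$ from the hypothesis. Next, I would push these bitstrings through $f$ to define $y^v:=f(x^v)$ and read off candidate cuts $B_r:=\{v\in V:y^v_r=1\}$ for each $r\in[R]$. The hypothesis $f(x^{(i)})=y^{(i)}$ guarantees that $B_r\cap\partial V=J_r$, so $B_r$ is a valid (not necessarily minimal) cut separating $J_r$ from its complement, giving $S(J_r)\leq w(\partial B_r)$.

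Finally, I would rewrite each weighted sum as a single sum over edges, noting that an edge $(u,v)$ of weight $w_{uv}$ contributes $w_{uv}\norm{x^u-x^v}_{\alpha}$ to $\sum_l\alpha_l\,w(\partial A_l)$ and $w_{uv}\norm{f(x^u)-f(x^v)}_{\beta}$ to $\sum_r\beta_r\,w(\partial B_r)$. The contraction hypothesis bounds the $\beta$-norm by the $\alpha$-norm edgewise, and summing with nonnegative weights yields $\sum_l\alpha_l S(I_l)\geq\sum_r\beta_r w(\partial B_r)\geq\sum_r\beta_r S(J_r)$. There is no deep technical obstacle; the proof is essentially syntactic once the viewpoint is fixed. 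The main subtlety to watch is the consistent choice of the ``$I_l$-side'' of each min-cut, which is well-defined because a cut must separate $I_l$ from $[n+1]\smallsetminus I_l$ and therefore unambiguously labels each external vertex's side, ensuring that the occurrence bitstrings attached to external vertices really do match the $x^{(i)}$ appearing in the hypothesis.
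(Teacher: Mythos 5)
Your proposal is correct and follows essentially the same argument the paper sketches (and attributes to~\cite{Bao:2015bfa,Avis:2021xnz}): your per-vertex bitstrings $x^v$ are exactly the labels of the paper's subsets $W(x)$, your cuts $B_r$ are the paper's $U_r=\bigcup_{f(x)_r=1}W(x)$, and the edgewise comparison via the weighted Hamming norms is precisely how the contraction condition is used there. No discrepancies to report.
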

\update{The final requirement on occurrence bitstrings is a simple consistency condition that only the appropriate external vertices from LHS min-cuts are included by $f$ in forming cuts for each RHS set $J_{r}$. If $f$ is a function satisfying the contraction condition in Theorem \ref{thm:contraction}, we call $f$ a \textit{contraction map}.} For the detailed proof, we refer the interested reader to~\cite{Bao:2015bfa,Avis:2021xnz}, but we briefly review the elements here. At the core of the proof is constructing a partitioning of the underlying graph into $2^L$ subsets of vertices as determined by inclusion/exclusion with respect to each of the $L$ min-cuts of LHS terms in \eqref{eq:candidate_ineq}. Each of these $2^L$ subsets is conveniently labeled by a bitstring $x\in\{0,1\}^L$. Given a bitstring $x$, one can reconstruct the associated vertex subset via $W(x) = \bigcap_{l=1}^L W_l^{x_l}$, where $W_l^1=W_l$ is the min-cut for the $l^{\text{th}}$ region, and $W_l^0 = {W_l}^\complement$ is its complement. This may also be inverted to recover the cuts themselves via $W_l = \bigcup_{x_l=1}W(x)$, where the union runs over all $x\in\{0,1\}^L$ subject to $x_l=1$.
While the entropy will depend on the actual weights of the edges in a graph, the construction of these subsets $W(x)$ allows one to keep track of how many times we count edge weights coming out of that subset when computing entropies.

The contraction map $f$ takes bitstrings of length $L$ to bitstrings of length $R$, and in this way utilizes the subsets $W(x)$ to build (not necessarily minimal) cuts $U_r$ for each of the RHS subsystems. More explicitly, these cuts are specified in terms of $f$ by $U_r = \bigcup_{f(x)_r=1}W(x)$ (cf. the reconstruction of the LHS cuts $W_l$ above). By minimality of cuts, the sum of edge weights involved in every $U_r$ will upper bound the true entropy of terms on the RHS. Thus, $f$ can be thought of as giving a sort of naive overestimate of the entropies of terms on the RHS, as long as the cuts $U_r$ are appropriately faithful (e.g., vertices in a cut for $I$ are not all excluded from vertices in a cut for $I\cup J$). 

The contraction condition on $f$, while looking cryptic and abstract since it is defined on bitstrings, enforces that edges coming out of the $W(x)$ are counted more times on the LHS of the inequality than on the RHS. Because it holds for every pair of bitstrings, which correspond to different sets $W(x)$, it says that edges between different $W(x)$ were counted no more times on the RHS than on the LHS, hence proving the inequality~\eqref{eq:candidate_ineq} valid regardless of the edge weights in the underlying graph. \update{Determining the existence of a contraction map for a given entropy inequality can be accomplished via greedy algorithms which are guaranteed to find one if it exists \cite{Bao:2015bfa}. Unfortunately though, this rapidly becomes computationally intractable, since one is effectively searching in a space of all possible $2^{R\times(2^L-(n+1))}$ maps $f$ compatible with the last line of Theorem~\ref{thm:contraction} as specified by their image bitstrings.\footnote{See \cite{Avis:2021xnz} for a more refined study of the space of possible contraction maps.}}

\subsection{Generalizing to hypergraphs}
In \cite{Bao:2020zgx}, this was generalized to hypergraph models, where, again, entropies were declared to be calculated by weights of min-cuts. The sets $W(x)$ were again useful, but because hyperedges can connect more than two vertices at a time, one had to consider not just pairwise interactions between the $W(x)$, but also general $k$-edges, i.e., hyperedges connecting $k$ vertices. Demanding that $k$-edges on the LHS were counted no fewer times than on the RHS, essentially imposed additional criteria on the contraction map $f$, which neatly nested into a sum over the rank $k$ of the involved hyperedges. This is captured in a generalization of the contraction condition:
\begin{nthm}\label{thm:gencontraction}
    Let $f:\{0,1\}^L\to\{0,1\}^R$ be an $i^k_{\alpha}$-$i^k_{\beta}$ contraction:
    \begin{equation}
        i^k_{\alpha}(x^1,\ldots,x^k) \geq i^k_{\beta}(f(x^1),\ldots,f(x^k)) \qquad \forall x^1,\ldots,x^k\in\{0,1\}^L.
    \end{equation}
    If $f\left(x^{(i)}\right) = y^{(i)}$ for all $i\in[n+1]$, then \eqref{eq:candidate_ineq} is a valid inequality on rank-$k$ hypergraphs.
\end{nthm}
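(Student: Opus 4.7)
The plan is to generalize the partition/cut argument used in the graph proof to accommodate hyperedges with $k$ endpoints, retaining the same high-level structure but replacing pairwise edge counting by a $k$-ary counting of how hyperedges cross LHS versus RHS cuts. Given any rank-$k$ hypergraph $H=(V,E,w)$ representing some entropy vector, I would first compute for each LHS term $S(I_l)$ a min-cut $W_l \subseteq V$ separating external vertices $I_l$ from $\partial V\smallsetminus I_l$. These $L$ min-cuts induce a partition of $V$ into the same $2^L$ cells $W(x)=\bigcap_{l=1}^L W_l^{x_l}$ used for graphs, where $W_l^1=W_l$ and $W_l^0=W_l^\complement$.

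Next I would construct candidate RHS cuts via the contraction map by setting $U_r = \bigcup_{f(x)_r=1} W(x)$. The boundary condition $f(x^{(i)})=y^{(i)}$ guarantees that each $U_r$ is a valid cut separating $J_r$ from $\partial V\smallsetminus J_r$, so minimality on the hypergraph gives $S(J_r)\leq w(U_r)$. Hence it suffices to establish
\begin{equation}
\sum_{l=1}^{L}\alpha_l\, w(W_l) \;\geq\; \sum_{r=1}^{R}\beta_r\, w(U_r),
\end{equation}
since the left-hand side equals $\sum_l \alpha_l S(I_l)$ by assumption and the right-hand side bounds $\sum_r \beta_r S(J_r)$ from above.

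The core of the proof is a hyperedge-by-hyperedge accounting. For each hyperedge $e\in E$ of rank $k$, label its endpoints by the bitstrings $x^1,\ldots,x^k\in\{0,1\}^L$ of the cells $W(x^j)$ containing them. A min-cut $W_l$ contributes $w(e)$ to $w(W_l)$ precisely when $e$ is not fully contained on one side of $W_l$, i.e.\ when the multiset $\{x^1_l,\ldots,x^k_l\}$ is not constant; analogously $U_r$ picks up $w(e)$ exactly when $\{f(x^1)_r,\ldots,f(x^k)_r\}$ is non-constant. The functionals $i^k_\alpha$ and $i^k_\beta$ are defined to encode precisely these weighted crossing counts, so the hypothesized contraction inequality
$i^k_\alpha(x^1,\ldots,x^k)\geq i^k_\beta(f(x^1),\ldots,f(x^k))$
is exactly the statement that every hyperedge contributes at least as much to the LHS sum of cuts as to the RHS. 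Multiplying by $w(e)$ and summing over $e\in E$ yields the desired inequality; the universal quantifier over all input $k$-tuples of bitstrings ensures the bound holds irrespective of which cells the endpoints of $e$ happen to populate.

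The main obstacle is conceptual rather than computational: correctly identifying the functionals $i^k_\alpha$ that generalize the weighted Hamming norm $\norm{\cdot-\cdot}_\alpha$ used for ordinary edges. For $k=2$ a hyperedge is cut by $W_l$ iff the single pair of endpoint bitstrings disagrees in coordinate $l$, which is precisely what Hamming distance measures, but for $k>2$ the endpoints may spread across up to $k$ distinct cells, so the relevant quantity must depend symmetrically on all $k$ bitstrings simultaneously and register a contribution whenever the set of $l$-th coordinates is non-constant. Formulating $i^k_\alpha$ so that it (i) reduces to $\norm{\cdot-\cdot}_\alpha$ when $k=2$, (ii) correctly captures hyperedge crossings in the hypergraph cut problem, and (iii) admits a clean contraction-style inequality with $f$, is the crucial step; once this definition is in place, the bookkeeping argument above proceeds in direct analogy to the graph case of Theorem~\ref{thm:contraction}.
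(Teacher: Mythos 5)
Your proposal is correct and follows essentially the same route as the paper (and its reference for this theorem): partition the vertices into the cells $W(x)$, build RHS cuts $U_r=\bigcup_{f(x)_r=1}W(x)$, and account hyperedge-by-hyperedge for crossings, with the contraction condition guaranteeing each hyperedge is counted at least as often on the LHS as on the RHS. The functional you flag as the remaining obstacle is exactly the one the paper defines immediately after the theorem, $i^k_\alpha(x^1,\ldots,x^k)=\sum_l\alpha_l i^k(x^1_l,\ldots,x^k_l)$ with $i^k$ the indicator that the bits are not all equal, which is precisely your ``non-constant multiset of $l$-th coordinates'' criterion.
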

Here, $i^k_\alpha(x^1,\ldots,x^k) = \sum_l\alpha_li^k(x^1_l,\ldots,x^k_l)$ and $i^k:\{0,1\}^k\to\{0,1\}$ is an indicator function that returns 0 if the input bitstring is all 0s or all 1s, and 1 otherwise. Because a valid contraction map $f$ for hypergraphs must obey additional constraints, an inequality that is true for graphs may no longer hold for hypergraphs. Thus, the hypergraph entropy cone has fewer constraints and is larger. That this containment is strict was shown in~\cite{Bao:2020zgx}.

\section{Link models}\label{sec:link models}
\subsection{Introduction}
Inspired by the previous models, we now introduce the link model.
In this paper, a \textit{link} consists of standard $1$-dimensional \textit{knots} in $3$-dimensional Euclidean space. For our purposes, knots will often be trivial (the \textit{unknot}) and just referred to as \textit{loops}. Letting $\mathcal{L}$ denote the \update{finite set} of loops, we distinguish a subset $\partial \mathcal{L} \subseteq \mathcal{L}$ as \textit{external} and call the others \textit{internal}. All loops are assigned some nonnegative weight. Notice that links no longer carry distinct vertex/edge-like components -- loops will implement both functionalities now. In analogy with previous discussions, a bijection $\partial \mathcal{L}\to[n+1]$ once again identifies the external loops as parties of a quantum state.

\update{To define the notion of loop cuts on a link model in analogy with edge cuts on graph models, we will need the following basic idea:}

\begin{ndefi}
    \update{An $n$-loop link $L$ embedded in $\mathbb{S}^3$ is said to be \emph{topologically linked} if for any choice of embedded disks $\{D_i\subset\mathbb{S}^3\}_{i=1}^n$ such that $L=\bigcup_{i=1}^n \partial D_i$, the set $\;\bigcup_{i=1}^n D_i$ is topologically connected. Otherwise $L$ is said to be \emph{topologically unlinked}.}
\end{ndefi}

\update{Given a link with loop set $\mathcal{L}$, any subset of loops $\tilde{\mathcal{L}}\subseteq \mathcal{L}$ defines a \emph{sublink}. Similar to the notion of connected components in a topological space, one easily sees that a general link $L$ may always be decomposed into \emph{connected sublinks}, namely, maximal sublinks which are topologically linked. Hence we can equivalently understand a link $L$ as being topologically unlinked if and only if it consists of more than one connected sublink. Clearly, topological linking is a property which can be lost when some subset of loops is removed from a link, in the sense that the remaining loops can break into multiple connected sublinks. Following this intuition, we will say that two loops of a link $L$ are topologically unlinked in $L$ if they belong to distinct connected sublinks, and similarly that two subsets of loops are topologically unlinked in $L$ if every pair of respective loops belong to distinct connected sublinks.}\footnote{\label{fn:sublooplinks}\update{The converses are clearly as follows: two loops of a link $L$ are topologically linked in $L$ if they belong to the same connected sublink, and similarly two subsets of loops are topologically linked in $L$ if all their loops belong to the same connected sublink. However, let us emphasize an important subtlety: two loops of a link $L$ which are topologically linked in $L$ need not be so if taken to define a link on their own. For instance, if $L$ is a Brunnian link, any two loops in it are topologically linked in $L$, but if taken to define a link on their on (cf. deleting the other loop), they would be topologically unlinked (indeed, they would define a $2$-component unlink). We will come back to this in Section \ref{ssec:toolkit}.}}

\update{We use the above to define a \emph{loop cut} $C$ for a subsystem $I\subseteq[n]$ as any set of internal loops $C\subseteq \mathcal{L}\smallsetminus\partial \mathcal{L}$ which, if removed, give a link where the loops in the subsystem $I$ are topologically unlinked from all other external loops in $\partial \mathcal{L}$. The weight of $C$ is given by the sum of its loop weights, and denoted by $\abs{C}$.} The entropy of $I$ is defined as the minimum weight of all possible loop cuts $C$ for $I$. Any such \emph{loop min-cut} for $I$ will be denoted by $C_I$.\footnote{When there exist multiple min-cuts with the same cut weight, one simply makes a choice and sticks to it.} In the spirit of previous work, we posit that the entropy of a subsystem $I$ is given by the weight of a loop min-cut $C_I$. Note that both the graph and hypergraph models are easily reproducible in the link model by converting vertices to infinite-weight loops and $k$-edges to finite weight loops, all Hopf-linked to their vertex loops, as illustrated in Figure~\ref{fig:link_conversion}.

\begin{figure}
    \centering
    \includegraphics[width=.8\textwidth]{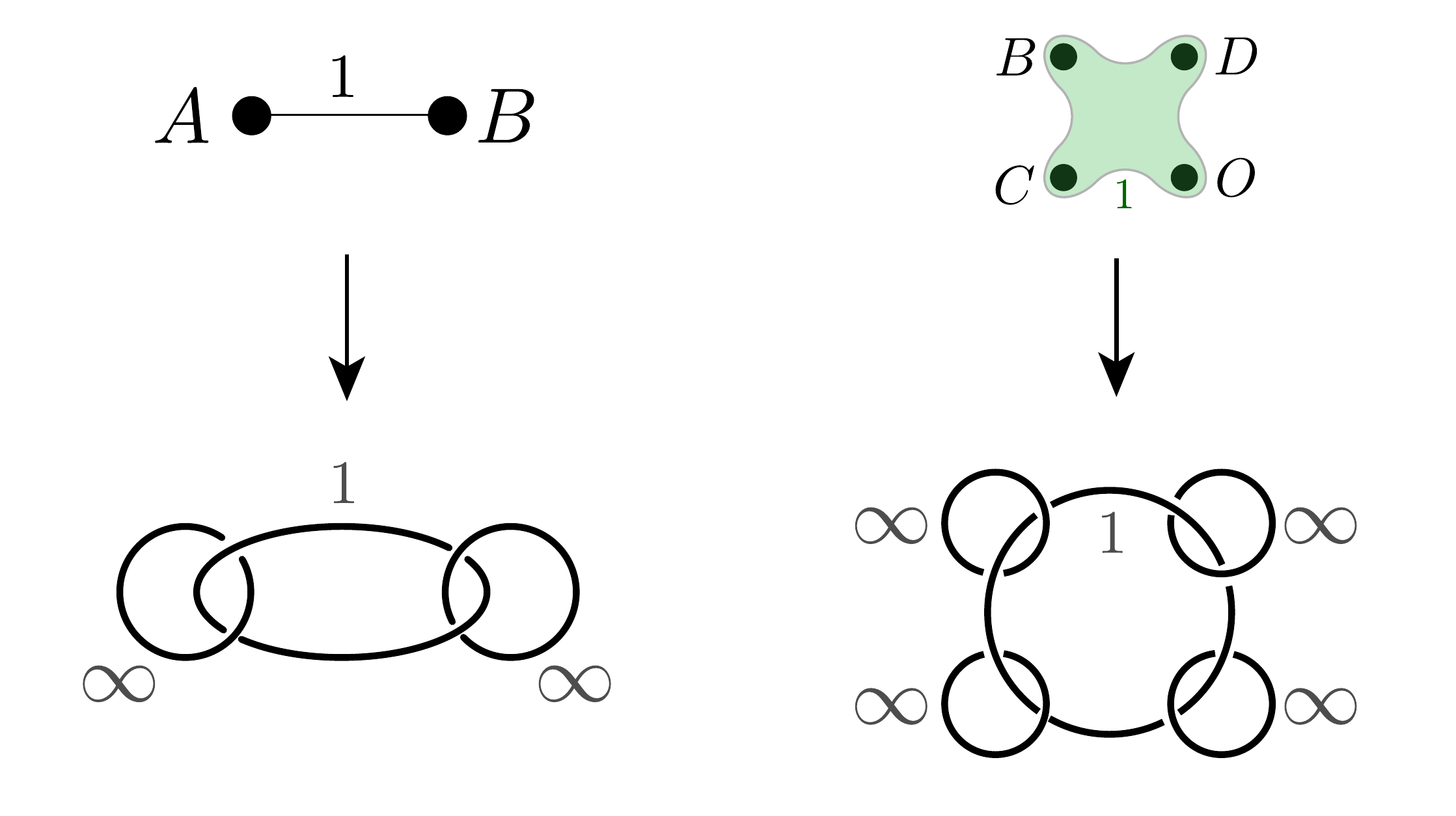}
    \caption{Link construction from graph and hypergraph models. One can reproduce all min-cut weights appropriately by making the following substitutions. Vertices do not contribute to the cut weight, so they should be replaced with loops of infinite weight. Loops of finite weight can be used to replace both edges and hyperedges where the vertices are Hopf-linked to the edge loop. Note that link models inherently no longer distinguish between vertices and edges.}
    \label{fig:link_conversion}
\end{figure}

\subsection{Capturing the entropies of ray 15}
To motivate the construction of the link model, we share an explicit quantum state whose subsystem entropies can be represented by the link model yet provably fail to be realizable using hypergraph models~\cite{Bao:2020zgx,Bao:2020mqq}. The pertinent entropy vector, henceforth referred to as ray 15 following~\cite{Bao:2020mqq}, comes from a particular 5-party mixed graph state,\footnote{Note that the graph state described here is in the context of~\cite{Hein_2004}, and not a state whose entropy vector can be realized by graph representation.} where the vertices denote spin systems and edges represent Ising-type interactions~\cite{Hein_2004}. Explicitly, ray 15 reads
\begin{equation}\label{eq:gs15}
    \Vec{S}_{15} = (1\,1\,1\,1\,1\,;1\,2\,2\,2\,2\,2\,2\,2\,2\,1\,;2\,2\,2\,2\,2\,2\,2\,2\,2\,2\,;2\,2\,1\,2\,2\,;1),
\end{equation}
where entropies are listed in lexicographical order, with subsystems of increasing number of parties separated by a semicolon. Notably, all graph states are stabilizer states, and hence, ray 15 must lie within the 5-party stabilizer entropy cone. 

\begin{figure}
    \centering
    \includegraphics[width=.8\textwidth]{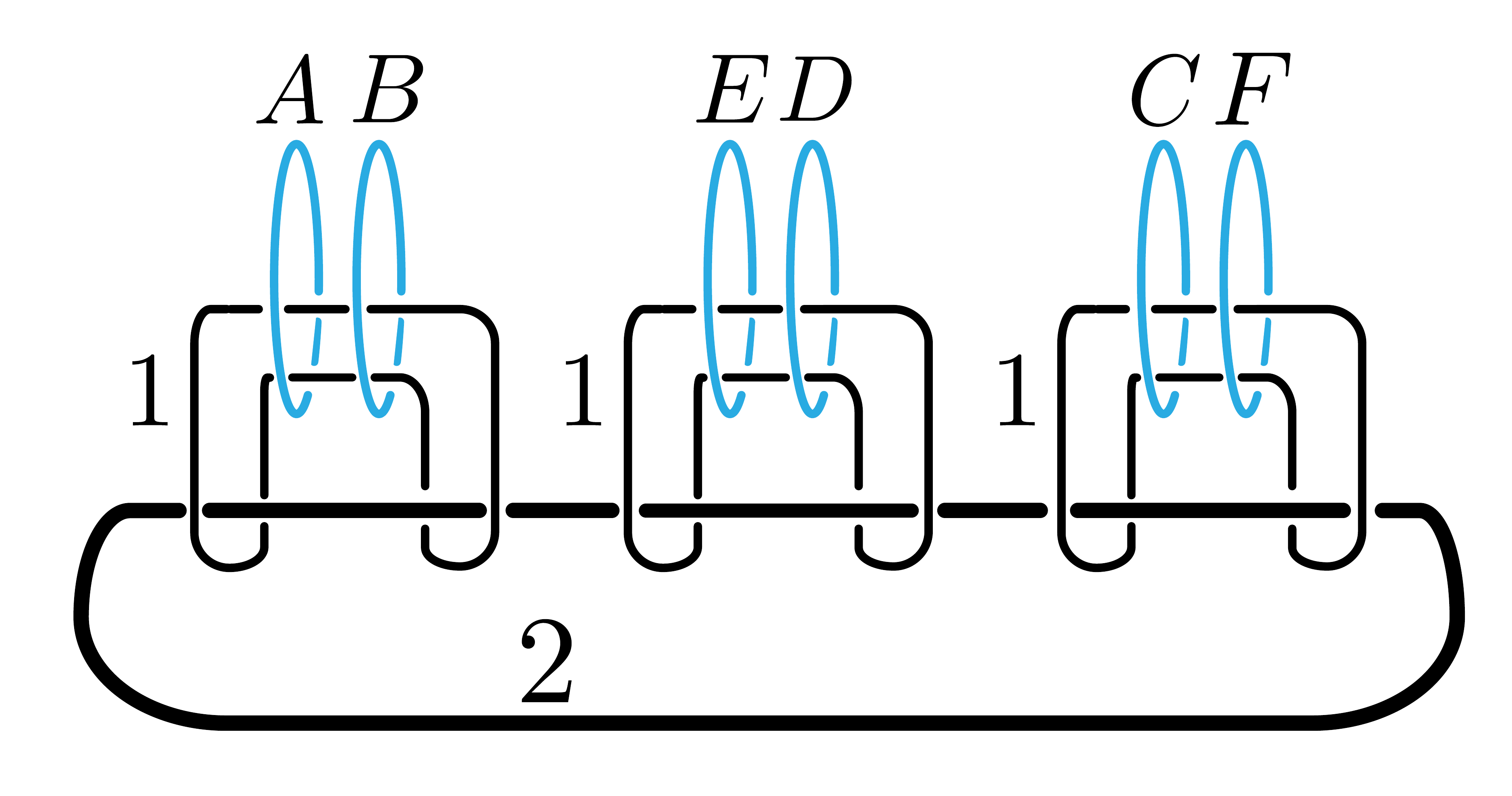}
    \caption{The topological link whose entropies reproduce ray 15 from~\cite{Bao:2020mqq}. $A$, $B$, $C$, $D$, and $E$ are the parties in question, with $F$ taken to be the purifier. The min-cut of any single external loop has weight $1$, as do the pairs $AB$, $ED$, and $CF$. All other pairs of external loops have min-cut weight $2$, and so do all triples. The min-cut weight of any other subsystem can be obtained from that of its complement with respect to all external loops.}
    \label{fig:link_15}
\end{figure}

In attempting to determine whether the hypergraph and stabilizer entropy cones coincide, the authors found that the answer is negative~\cite{Bao:2020mqq}, indicating the strict containment of the hypergraph entropy cone within the stabilizer entropy cone~\cite{Walter:2020zvt}. Indeed, ray 15 violates a linear entropy inequality that all hypergraph models were shown to obey in~\cite{Bao:2020mqq}:
\begin{equation}\label{eq:ineq15}
    S_{AB} + S_{DE} + S_{ACD} + 2 S_{ACE} + S_{BCD} + S_{ABDE} \geq S_{AC} + S_{AE} + S_{BD} + 2 S_{ABCD} + S_{ACDE}.
\end{equation}
In Figure \ref{fig:link_15}, we give an explicit link model that reproduces ray 15, thereby demonstrating that there exist link models which reproduce entropy vectors inaccessible by hypergraphs alone. Moreover, because all hypergraph models can be reproduced in the link model, the entropy cone of links properly contains the hypergraph entropy cone.

This construction motivates us to investigate the degree to which link models can represent multipartite entanglement of generic quantum states. We will consider whether the contraction-map technology can be generalized to link models, allowing us to extend the entropy cone of hypergraphs to that of link models. This will pave the way for exploring whether the link model entropy cone can possibly contain the stabilizer entropy cone of~\cite{Linden:2013kal} or even the full quantum entropy cone~\cite{pippenger2003inequalities}. \update{One can consider a program of using contraction maps to directly prove candidate entropy inequalities on link models, and exploring such inequalities as candidate entropy inequalities for the currently unknown higher-party stabilizer and quantum entropy cones (e.g. by testing whether they are violated by the entropies of known stabilizer or quantum states). In this way, the contraction map methods and associated graph/link models serve as an immediate application and useful set of tools in the study of quantum state entropies and their constraints. We further discuss the application of link models to classifying the structure of entanglement of quantum states in Section \ref{sec:disc}.}

\subsection{The link model toolkit}
\label{ssec:toolkit}

Studying general link models will require the introduction of various definitions and constructs, to which we now turn. Given some link, we say that a sublink $\tilde{\mathcal{L}}$ of it is an \emph{irreducible sublink} if it is intrinsically topologically linked, i.e. if it is topologically linked once all loops in $\mathcal{L}\smallsetminus\tilde{\mathcal{L}}$ are deleted.\footnote{Note that a sublink of some link $L$ may not be irreducible even if its constituent loops happen to be topologically linked in $L$. -- see footnote \ref{fn:sublooplinks} for an example.} A simple class of irreducible sublinks are the connected sublinks into which any link may be decomposed.

Any loop cut $C$ for $I$ naturally defines two additional subsets of loops. Let $W_C \subseteq \mathcal{L} \smallsetminus C$ be the minimal-cardinality subset of loops with $W_C \cap \partial \mathcal{L} = I$ and such that, once the loops in $C$ are removed, form a collection of connected \update{sublinks}. 
In practice, $W_C$ can be found by collecting all loops in the link defined by $\mathcal{L}\smallsetminus C$ that are topologically \update{linked} to the pertinent external loops $I$. The resulting set $W_C$ is clearly unique and referred to as the \textit{cut interior} of $C$. Similarly, $\overline{W}_C = \mathcal{L} \smallsetminus (W_C\cup C)$ is also unique and referred to as the \textit{cut exterior} of $C$. The cut exterior contains all external loops $\partial \mathcal{L} \smallsetminus I$ which are absent from $W_C$, and consists of connected \update{sublinks} as well. When talking about a loop min-cut $C_I$, we will denote its associated cut interior as $W_I$ for simplicity. As an example, consider looking at subsystem $I=AC$ in Figure \ref{fig:link_15}: a valid loop min-cut $C_{AC}$ for it is the weight-$2$ loop which, when removed, defines the cut interior $W_{AC}=\{A,C\}$ alone -- the corresponding cut exterior $\overline{W}_{AC}$ consists of the three weight-$1$ loops and the complementary external loops $B$, $D$, $E$ and $F$.

To refine our analysis of loop cuts, we introduce a notion of \textit{prunability}. For a given subsystem $I$, the loop min-cut $C_{I}$ by definition contains loops which topologically unlink $W_{I}$ from $\overline{W}_{I}$. Consider a sublink $B_I\subseteq \mathcal{L}$ containing $k$ loops that is irreducible and has nontrivial intersection with all of $W_{I}, \overline{W}_{I},$ and $C_{I}$. Let us refer to such a sublink as a \textit{bridge} across the loop min-cut $C_I$. By definition, bridges clearly have cardinality $k\ge3$. We call a bridge \textit{prunable} if it remains a bridge upon removal of any nontrivial subset of its loops. We say a bridge is \textit{minimal} if it cannot be pruned. A minimal bridge of loop cardinality $k$ is called a \textit{$k$-bridge} and denoted by $B^k_I$. These turn out to be crucial objects for link models of entanglement to not trivialize down to hypergraphs. Within this framework, hypergraphs consists only of Hopf-linked $3$-bridges. One can explicitly construct a link model from any hypergraph model by replacing vertices with infinite weight loops and (hyper)edges with loops of the corresponding weight -- see Figure~\ref{fig:link_conversion} for an illustration. 
The richer entanglement structure that link models capture is due to $3$-bridges with different topologies as well as the existence of $k$-bridges with $k\geq 4$. As an example of a topologically interesting $3$-bridge, consider the link model in Figure~\ref{fig:link_15}. If we take $I=BCD$, the loop min-cut consists of the weight-$2$ loop. An associated $3$-bridge consists of $B$ in the cut interior, the weight-$1$ loop in the cut exterior, and the weight-$2$ loop in the loop min-cut. This 3-bridge does not involve Hopf-linked loops -- instead, as a sublink, it is a 3-component Brunnian link.

Loops in $C_{I}$ can be classified by the minimal cardinality of the $k$-bridges for $I$ that they contribute to. In particular, we say that $\ell\in C_{I}$ is a \textit{$k$-loop} for $I$ if, among all minimal bridges $B\ni\ell$ for $I$, the smallest bridge cardinality is $k$. For instance, one can easily verify that the weight-$2$ loop in Figure~\ref{fig:link_15} is a 3-loop for $I=BCD$.
We can stratify the loops $\ell\in C_{I}$ into subsets $C^{(k)}_{I}\subseteq C_{I}$ such that $\ell$ is a $k$-loop for $I$ if and only if $\ell\in C^{(k)}_{I}$. Importantly, the subsets $C^{(k)}_{I}$ are disjoint for different $k$ and define a finite spanning partition of $C_{I}$ for any link with finite $\abs{\mathcal{L}}$, i.e., $\bigcup_{k=3}^{\abs{\mathcal{L}}} C^{(k)}_{I} = C_{I}$. When computing the total weight of a min-cut $C_{I}$, we thus have 
\begin{equation}
\label{eq:bridgedecom}
    \abs{C_{I}} = \sum_{k=3}^{\abs{\mathcal{L}}} \abs{C^{(k)}_{I}} = \sum_{k=3}^{\abs{\mathcal{L}}} \sum_{\ell\in C^{(k)}_{I}} w(\ell),
\end{equation}
where $w(\ell)$ is the weight of $\ell$.

\subsection{A topological min-cut bottleneck}
In previous work, the simple connectivity structure of graphs and hypergraphs implied that a cut, given by a subset of vertices, automatically specified the (hyper)edges whose weights contributed to the cut weight. When it comes to link models, however, loops subsume both the role of edges and vertices. While powerful for capturing a wider class of entropy vectors, this generality comes with a price. The analogous cut interior $W_C$ now no longer specifies uniquely the loop cut $C$ that gave rise to it and whose weight would determine the cut weight.

In order to address this new issue, one must consider the topological link classification problem~\cite{habegger1990classification}; that is, given a topological link, determining whether a sublink is topologically linked to the remainder of the link, or whether they are, in fact, unlinked. This problem is an outstanding problem in topology which we do not propose a solution to. Instead, we will build our formalism on top of a posited solution. We parameterize our ignorance in the form of an oracle that essentially does the difficult work of obtaining the min-cut structure of a general link.

Although such an oracle already performs a large task, we attempt to pare its power to the minimal necessary for our purposes. Specifically, consider an oracle called \texttt{Link-Min-Cut} that takes as input a link $\mathcal{L}$ and a set of external loops $I \subseteq \partial\mathcal{L}$, and outputs its corresponding loop min-cut $C_I$. One can then construct the cut interior $W_I$ as described above by taking the set of external loops $I$ and finding the smallest number of connected \update{sublinks} that contain them after $C_I$ is removed.

Building on the \texttt{Link-Min-Cut} routine, we employ the following additional subroutine:
\begin{equation}
    \texttt{Bridge-Oracle}(\mathcal{L}, I, B)=
    \begin{cases}
        1 & \quad \text{if $B$ is a $\abs{B}$-bridge across $C_I$},\\
        0 & \quad\text{otherwise.}
    \end{cases}
\end{equation}
where $B$ is a set of loops, a candidate minimal bridge for $I$, and the required knowledge of $C_I$ is obtained from the output of \texttt{Link-Min-Cut} for the same $\mathcal{L}$ and $I$ arguments. The ultimate purpose of \texttt{Bridge-Oracle} is to solve a decision problem, namely, whether $B$ serves as a $|B|$-bridge for the set of external loops in $I$. In the generalization of the contraction-map proof method, this oracle will be used for determining which loops contribute to the entropy of the link model.

\subsection{Generalization of contraction maps}
The contraction-map method to prove entropy inequalities involves a clever partitioning of the vertex set in (hyper)graph models. For a given inequality as in \eqref{eq:candidate_ineq} with indices $\{I_l\subseteq[n]\}_{l\in[L]}$ on its LHS, every vertex could be labeled by a bitstring of length $L$ where each bit encoded whether it was contained in the min-cut for each $I_l$. For the link model, we need to generalize this addressing to loops in terms of loop min-cuts, which behave differently. Our goal in this subsection will be to prove that such a generalization exists, given access to an appropriate oracle for bridges.

We first introduce some notation and useful constructions. Pick a loop min-cut $C_{l}$ for each index $I_l$ on the LHS of the entropy inequality, and let $W_l$ and $\overline{W}_l$ be respectively its cut interior and exterior. To each loop $\ell\in \mathcal{L}$, we assign an upgraded bitstring $x\in\{-1,0,1\}^L$ via
\begin{equation}
    x_{l} = \delta(\ell\in W_{l}) - \delta(\ell\in \overline{W}_{l}).
\end{equation}
In other words, $x_l$ is set to $-1$, $0$ or $1$ if $\ell$ is in $\overline{W}_l$, $C_l$, or $W_l$, respectively.
We can now partition the loop set $\mathcal{L}$ into disjoint subsets of loops via
\begin{equation}
\label{eq:bitsubw}
    W(x) = \bigcap_{l\in[L]} W_{l}^{x_l},
\end{equation}
where we are declaring $W_l^1\equiv W_l$, $W_l^0 \equiv  C_l$ and $W_l^{-1}\equiv \overline{W}_l$. There will be $3^L$ such sets of loops (i.e., as many as bitstrings), with some possibly empty. Note that we can reconstruct the cut interior $W_l$ of any $I_l$ in terms of the $W(x)$ subsets via
\begin{equation}
\label{eq:recwl}
    W_{l} = \bigcup_{x:x_l = 1} W(x),
\end{equation}
and similarly for the loop min-cuts using
\begin{equation}
\label{eq:reccl}
    C_{l} = \bigcup_{x:x_l = 0} W(x).
\end{equation}
In (hyper)graph models, one was easily able to determine the edges contributing to a min-cut (cf. $C_{l}$) by just knowing the vertex min-cut (cf. $W_l$). In contrast, in the link model the cut interior $W_l$ does not immediately determine what the loop min-cut $C_l$ is in any obvious way for a general link.\footnote{Recall that it was $C_l$ which defined $W_l$, and not the converse.} This explains why we needed an upgraded bitstring addressing encoding these two objects separately. In fact, we expect an explicit specification of such a map $W_l\mapsto C_l$ to be a highly nontrivial problem in knot theory which we will not try to solve completely. Instead, we will try to break down this map into pieces suitable for a proof by contraction. However, with these objects defined, we can now give the formal statement of the generalized contraction map:
\begin{nprop}\label{prop:generalized contraction}
    An entropy inequality in the form of \eqref{eq:candidate_ineq} is valid on link models if:
    \begin{enumerate}
        \item There exists a map $f:\{-1,0,1\}^L \to \{-1,0,1\}^R$ from min-cuts for regions on the LHS of \eqref{eq:candidate_ineq} to valid cuts for regions on the RHS of \eqref{eq:candidate_ineq}; that is, the regions defined by
        \begin{equation}
        \label{eq:rhscuts}
            U_r = \bigcup_{x:f(x)_r=1}W(x)
        \end{equation}
        are (not necessarily minimal) cuts for $J_r$.
        \item For every $3\le n \le k\le |\mathcal{L}|$, $f$ obeys the cut-dependent contraction property
        \begin{equation}
    \sum_{l=1}^L\alpha_l F^{(n,k)}_l(x^1,\ldots,x^n) \geq \sum_{r=1}^R\beta_r \delta(f(x^1)_r = 0) \sum_{l=1}^{L} F^{(n,k)}_l(x^1,\ldots,x^n),
    \end{equation}
    where $F_{l}^{(n,k)}:\{-1,0,1\}^{L\times n}\to\{0,1\}$ is an oracular indicator function that returns 1 if the input $n$-tuple of bitstrings corresponds to a $k$-bridge.\footnote{Double-counting of loop contributions is avoided by constructing $F_{l}^{(n,k)}$ as in Algorithm \ref{algo}.}
    \end{enumerate}
\end{nprop}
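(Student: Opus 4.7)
The plan is to follow the contraction-map template used in Theorems \ref{thm:contraction} and \ref{thm:gencontraction}, lifted to the topological setting by replacing edge-based accounting with bridge-based accounting. The point of departure is the observation that, in a link model, cut weights are no longer determined combinatorially by the vertex side (the cut interior $W_l$ alone): one must separately track which loops topologically bridge between $W_l$ and $\overline{W}_l$. The bridge decomposition \eqref{eq:bridgedecom} is precisely the tool that exposes these loop contributions in a form amenable to a contraction argument.

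Concretely, I would begin by expanding $\sum_l \alpha_l S(I_l) = \sum_l \alpha_l |C_l|$ exactly, and $\sum_r \beta_r S(J_r) \le \sum_r \beta_r |U_r|$, where the inequality uses condition 1 (the $U_r$ of \eqref{eq:rhscuts} are valid, possibly non-minimal, cuts) together with the min-cut definition of entropy in link models. It then suffices to prove $\sum_l \alpha_l |C_l| \ge \sum_r \beta_r |U_r|$. Applying \eqref{eq:bridgedecom} to both sides stratifies each cut into pieces $C_l^{(k)}$ and $U_r^{(k)}$ indexed by minimum bridge rank $k$, and the key reduction is to prove the stratified inequality $\sum_l \alpha_l |C_l^{(k)}| \ge \sum_r \beta_r |U_r^{(k)}|$ for every $k$ separately; summing over $k$ from $3$ to $|\mathcal{L}|$ then gives the full claim.

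For each fixed $k$, every loop $\ell \in C_l^{(k)}$ participates in some minimal $k$-bridge for $I_l$. Using the oracular procedure codified in Algorithm \ref{algo}, I would fix a canonical such bridge for each pair $(\ell, l)$, spanning some number $n$ of distinct $W(x)$ subsets with $3 \le n \le k$. This converts the loop-by-loop sum into a sum over bitstring tuples $(x^1,\ldots,x^n)$ weighted by the loop content of $W(x^1)$, where the cut loop sits (identified by $x^1_l = 0$), multiplied by the indicator $F_l^{(n,k)}$ that such a bridge configuration is realized in the link. The analogous rewriting for $|U_r^{(k)}|$ uses the map $f$ to identify when $x^1$ corresponds to a cut loop of $U_r$, which is exactly the condition $f(x^1)_r = 0$ appearing in condition 2. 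Invoking condition 2 pointwise on each tuple and then summing against the nonnegative loop-weight measure on $W(x^1)$, over all $n$ and tuples, produces the desired inequality at fixed $k$.

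The main obstacle is the faithful bookkeeping of loops across bridges. Because a single loop can belong to multiple minimal bridges of the same cardinality, a naive identification between loops and bridge tuples would double-count weights and break the pointwise contraction step. This is precisely what Algorithm \ref{algo} must resolve: the construction of $F_l^{(n,k)}$ needs a deterministic tie-breaking rule that partitions the loops of $C_l^{(k)}$ among bridge tuples so that each loop is represented exactly once per $l$. A secondary subtlety, absent in the graph and hypergraph cases, is that the map $W_l \mapsto C_l$ is knot-theoretic rather than combinatorial, so the $F_l^{(n,k)}$ are only accessible via \texttt{Link-Min-Cut} and \texttt{Bridge-Oracle}; the resulting proof is inherently oracular, and its validity is conditional on these subroutines correctly identifying loop min-cuts and minimal bridges. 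Modulo the oracular inputs and the algorithmic tie-breaking, the argument then mirrors the hypergraph contraction proof: bridges play the role previously occupied by hyperedges, bridge ranks $k$ play the role of hyperedge ranks, and the pointwise contraction closes the weighted sum.
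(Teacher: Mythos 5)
Your proposal is correct and follows essentially the same route as the paper's proof: expand the LHS exactly via loop min-cuts, use condition 1 to bound the RHS by the weights of the (possibly non-minimal) cuts $U_r$, rewrite both sides as sums over bitstring tuples weighted by $\abs{W(x^1)}$ using the $C_l^{(n,k)}$ decomposition and the oracular indicators $F_l^{(n,k)}$ (with the $\delta(f(x^1)_r=0)$ factor selecting RHS cut loops), and then impose the contraction condition pointwise on tuples as a sufficient condition. Your remarks on double-counting and the oracular bookkeeping match the role the paper assigns to Algorithm \ref{algo}.
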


\update{We will need a few intermediate ingredients to prove this result, as we describe next.} To begin, we will think of the loop cut $C_l$ as a collection of loops independently satisfying some property with respect to $W_l$. Paralleling \eqref{eq:bridgedecom}, we begin by decomposing $C_l$ into sets of $k$-loops through
\begin{equation}
    C_l = \bigcup_{k=3}^{\abs{\mathcal{L}}} C_l^{(k)}.
\end{equation}
By definition, a loop $\ell \in C_l$ is a $k$-loop $\ell \in C_l^{(k)}$ for the smallest $k$ such that it belongs to a $k$-bridge $B_l^k$, where $B_l^k$ by construction intersects all of $W_l$, $\overline{W}_l$, and $C_l$ nontrivially. In particular, notice that if $\ell \in C_l$, then it must be that $\ell\in W(x)$ for some bitstring with $x_l=0$. \update{However, this decomposition is not fine-grained enough for our purposes: for a contraction condition to hold, we need to be able to sum over bitstrings, and the decomposition above does not contain enough information. Recall \eqref{eq:bitsubw}, which defines $W(x)$ for a bitstring $x$ in terms of an intersections of vertex subsets $W_l$ or their complements. However, a given $k$-bridge will generally intersect multiple $W(x)$, and the decomposition into $k$-loops does not let us determine which bridge $B_l^k$ or set $W(x)$ it belongs to. That is, we are unable to pass directly from $k$-loops to a condition on bitstrings, because we cannot identify bitstrings with specific $k$-loops that contribute to a cut.}

To handle this, we will introduce a further decomposition of a $k$-loop. Firstly, assuming access to an appropriate oracle for validating the existence of a bridge, we show that it is possible to decompose a cut $C_l$ such that each loop in $C_l$ belongs to a distinct bridge:
\update{
\begin{nlemma}\label{lem:single loop}
    Consider a loop min-cut $C_l$ and $k$-bridge $B_l^k$ for the boundary region $I_l$. Then $|B_l^k \cap C_l| = 1$.
\end{nlemma}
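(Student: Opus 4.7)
The plan is to prove the lemma by contradiction, showing that $|B_l^k \cap C_l| \geq 2$ forces a strict sub-cut of $C_l$ contradicting its minimality. Pick distinct $\ell_1, \ell_2 \in B_l^k \cap C_l$. By the minimality of $B_l^k$ as a bridge, the strict sub-sublink $B_l^k \setminus \{\ell_1\}$ is not a bridge. Since its intersections with $W_l$, $\overline{W}_l$, and $C_l$ (the last via $\ell_2$) remain nontrivial, the bridge property must fail through loss of irreducibility. Hence $B_l^k \setminus \{\ell_1\}$ is topologically unlinked as a standalone sublink, and decomposes into several topologically-linked connected components.

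Next, I would argue that no such component can contain both a $W_l$-loop and a $\overline{W}_l$-loop. If such a component $S$ also met $C_l$, it would be a proper sub-bridge of $B_l^k$ containing loops from all three regions, contradicting the minimality of $B_l^k$. If instead $S$ missed $C_l$ entirely, then $S$ would be a topologically-linked sub-sublink of $\mathcal{L}\setminus C_l$ straddling two distinct connected sublinks on opposite sides of the cut, which contradicts the fact that $C_l$ separates $I_l$ from $\partial\mathcal{L}\setminus I_l$.

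This one-sidedness of every component then shows that reinstating $\ell_1$ into $\mathcal{L}\setminus C_l$ cannot generate any new topological linking from a $W_l$-loop to an $\overline{W}_l$-loop through the loops of $B_l^k$, which is the key step toward declaring $C_l\setminus\{\ell_1\}$ a cut. The main obstacle is ruling out reconnection through sublinks that lie entirely outside $B_l^k$: a priori, $\ell_1$ might bridge $I_l$ to $\partial\mathcal{L}\setminus I_l$ via some alternative minimal bridge whose unique cut loop happens to be $\ell_1$, and whose other loops are disjoint from $B_l^k$. Establishing that this cannot happen for a genuine min-cut $C_l$ is where the proof becomes delicate, and requires leveraging the $\texttt{Bridge-Oracle}$ to systematically certify the absence of such alternative bridges in the portion of $\mathcal{L}\setminus(C_l\setminus\{\ell_1\})$ that touches $\ell_1$. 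Once this topological step is established, the cut $C_l\setminus\{\ell_1\}$ has strictly smaller weight than $C_l$, which contradicts the min-cut hypothesis and completes the proof.
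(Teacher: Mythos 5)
Your argument tracks the paper's own proof closely through its first two stages: the contradiction setup, the observation that minimality of $B_l^k$ together with the survival of $\ell_2$ forces the failure of the bridge property to occur through loss of irreducibility, and the component-by-component analysis showing that no connected sublink of $B_l^k\setminus\{\ell_1\}$ can meet both $W_l$ and $\overline{W}_l$. Your two sub-cases (a component meeting $C_l$ would be a smaller bridge, contradicting minimality; a component missing $C_l$ would link $W_l$ to $\overline{W}_l$ inside $\mathcal{L}\smallsetminus C_l$, contradicting that $C_l$ is a cut) are a slightly cleaner packaging of the paper's case split, and both are sound. The divergence is in the final step: the paper does \emph{not} try to show that $C_l\smallsetminus\{\ell_1\}$ is a cut. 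It keeps $\ell_i$ in the cut and instead releases the \emph{other} loops of $B_l^k\cap C_l$, defining $C_l'=C_l\smallsetminus\{\ell_j : j\neq i\}$, the point being that $B_l^k\setminus\{\ell_i\}$ has already been shown not to link its $W_l$- and $\overline{W}_l$-parts, so reinstating the $\ell_j$ cannot reconnect interior and exterior through $B_l^k$ itself.

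The obstacle you flag for your version --- that $\ell_1$ might be the unique cut loop of some other minimal bridge whose remaining loops are disjoint from $B_l^k$ --- is genuine, and your proposed resolution does not close it. \texttt{Bridge-Oracle} is a decision procedure: it can \emph{detect} such an alternative bridge, but it cannot certify that none exists, because nothing in the hypotheses forbids one. Concretely, $\ell_1$ could sit in the given $k$-bridge and simultaneously be Hopf-linked across the cut in a separate $3$-bridge $B'$ with $B'\cap C_l=\{\ell_1\}$; then $C_l\smallsetminus\{\ell_1\}$ is simply not a cut and no contradiction with minimality of $C_l$ is obtained. Note also that the natural repair --- running your one-sidedness argument over every minimal bridge containing $\ell_1$ --- fails precisely in this case, since for a bridge with $B'\cap C_l=\{\ell_1\}$ the removal of $\ell_1$ kills the intersection with $C_l$ and you can no longer deduce reducibility. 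So as written the proof is incomplete at exactly the step that carries the content of the lemma, and the missing step is not an implementation detail to be delegated to an oracle. (To be fair in the comparison: the symmetric concern --- that some released $\ell_j$ is the unique cut loop of another bridge --- also applies to the paper's choice of $C_l'$, and is handled only implicitly there; but the paper's choice at least keeps the loop under scrutiny, $\ell_i$, inside the cut.)
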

}
\begin{proof}
\update{
We prove this by contradiction. Suppose $|B_l^k \cap C_l| > 1$, and let $\ell_i \in B_l^k \cap C_l$. By definition of bridge minimality, $B_l^k$ is not prunable, so removing $\ell_i$ will cause $B_l^k$ to no longer be a bridge. This means that $B_l^k\setminus \{\ell_i\}$ is either no longer irreducible, does not have nontrivial intersection with $W_l$, $\overline{W}_l$, and $C_l$, or both. However, because $\ell_i \in C_l$ and there exists at least one $\ell_j\in C_l$ with $j\neq i$, the last two options are not possible. Hence, it must be the case that $B_l^k\setminus \{\ell_i\}$ is now reducible, meaning that the sublink defined by the remaining loops contains more than one connected sublink.
}

\update{
If one of the resulting connected sublinks still has nontrivial intersection with $W_l$, $\overline{W}_l$, and $C_l$, then the resulting loop set is a $k'$ bridge with $k' < k$, meaning $B_l^k$ was prunable and not minimal, yielding a contradiction. If, instead, none of the resulting connected sublinks has non-trivial intersection with $W_l$, $\overline{W}_l$, and $C_l$, then at least one of $(B_l^k\setminus \{\ell_i\}) \cap W_l$, $(B_l^k\setminus \{\ell_i\}) \cap \overline{W}_l$, or $(B_l^k\setminus \{\ell_i\}) \cap C_l$ is topologically unlinked from the other two loop sets. If $(B_l^k\setminus \{\ell_i\}) \cap C_l$ becomes topologically unlinked from the other two loop sets, then it must be the case that, in fact, all three loop sets are topologically unlinked from each other, for otherwise $C_l$ would have not defined a cut for the pertinent subsystem in the first place. Hence, it must be the case that $(B_l^k\setminus \{\ell_i\}) \cap W_l$ and $(B_l^k\setminus \{\ell_i\}) \cap \overline{W}_l$ are topologically unlinked, regardless of what happens to $(B_l^k\setminus \{\ell_i\}) \cap C_l$.
However, this implies $C_l$ was not minimal in weight, as we could instead define an alternative cut $C_l' = C_l\setminus\{\ell_j \in C_l : j \neq i\}$ for the same subsystem, with $(B_l^k\setminus \{\ell_i\}) \cap C_l$ now being part of either the new interior or exterior. As noted above, removing $C_l'$ topologically unlinks the new interior and exterior, so it is a valid cut. We also see that removing $C'_l$ makes $B_l^k$ topologically unlinked, and the weight of $C'_l$ is strictly smaller than that of $C_l$, because $C'_l \subset C_l$. We conclude that we must have $|B_l^k \cap C_l| = 1$.
}
\end{proof}

Lemma \ref{lem:single loop} will allow us to reduce our analysis of a cut $C_l$ to an analysis of the distinct bridges to make up each of its loops. Applying the lemma, we can separate the loops within a $k$-bridge $B_l^k$ into the loop in $C_l$, and the remaining $k-1$ loops in $B_l^k$ that belong to either $W_l$ or $\overline{W}_l$. Those loops in $B_l^k\cap W_l$ ($B_l^k\cap \overline{W}_l$) will necessarily belong to some $W(x)$ sets with $x_l=1$ ($x_l=-1$). However, notice that these additional $k-1$ loops need not all belong to distinct $W(x)$ sets -- instead, in general there will be between $2$ and $k-1$ such sets containing them, as mentioned previously. We now introduce the further decomposition of loop-cuts $C_l^{(n,k)}\subseteq C_l^{(k)}$ consisting of those $k$-loops in $C_l$ for which the smallest number of $W(x)$ sets needed to cover any one of its associated $k$-bridges is $n$. Enumerating bitstrings by a superscript as $x^i$, this means that every $\ell\in C_l^{(n,k)}$ is a $k$-loop with $\bigcup_{i=1}^{n} W(x^i)$ containing $B_k$ minimally in the sense that $n$ is the smallest across all associated $k$-bridges. Hence:
\update{
\begin{ncor}\label{cor:clnk}
Every $C_l^{(k)}$ can be further decomposed as
\begin{equation}
    C_l^{(k)} = \bigcup_{n=3}^kC_l^{(n,k)}
\end{equation}
where
\begin{equation}
\label{eq:clnk}
    C_l^{(n,k)} = \left\{ \ell\in W(x^{1}) \,:\, x^{1}_l=0, \;
    \bigcup_{i=1}^n W(x^i) \supseteq B_k \ni \ell \text{ minimally}
    \right\}_{(x^{1},\ldots,x^{n})},
\end{equation}
where the subscript $(x^{1},\ldots,x^{n})$ implies iteration over all $n$-tuples of bitstrings in $\{-1,0,1\}^L$, and the choice $x^{1}_l=0$ is without loss of generality. 
\end{ncor}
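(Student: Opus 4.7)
The plan is to verify three things: (i) the integer $n$ in the definition of $C_l^{(n,k)}$ is well-defined and lies in the range $\{3,4,\ldots,k\}$; (ii) the subsets $C_l^{(n,k)}$ cover all of $C_l^{(k)}$; and (iii) they are pairwise disjoint. The main input is Lemma \ref{lem:single loop}, which forces every $k$-bridge $B_k$ containing a given $\ell\in C_l^{(k)}$ to meet $C_l$ exactly in $\ell$, so its remaining $k-1$ loops lie in $W_l\cup\overline{W}_l$.

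First I would fix $\ell\in C_l^{(k)}$ together with any $k$-bridge $B_k\ni\ell$. Since the sets $W(x)$ partition $\mathcal{L}$ by \eqref{eq:bitsubw}, there is a unique bitstring $x^{1}$ with $\ell\in W(x^{1})$, and by \eqref{eq:reccl} it satisfies $x^{1}_l=0$. This is what justifies declaring the zero-bit bitstring to be $x^{1}$ without loss of generality in \eqref{eq:clnk}. Each of the remaining $k-1$ loops of $B_k$ lies in some $W(x^i)$ with $x^i_l\in\{-1,+1\}$ by \eqref{eq:recwl} and the disjointness of the partition; collecting the distinct bitstrings arising this way yields a smallest covering $B_k\subseteq\bigcup_{i=1}^{m} W(x^i)$, where $m=m(B_k)$ is just the number of distinct $W(x)$ sets meeting $B_k$.

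Next I would bound $3\le m\le k$. The upper bound is immediate: at worst every loop of $B_k$ lies in its own $W(x)$ set. For the lower bound, a bridge must nontrivially intersect each of $W_l$, $\overline{W}_l$, and $C_l$, which correspond to disjoint values $x_l\in\{+1,-1,0\}$ respectively, so any covering must include at least three bitstrings, one with each value of the $l^{\text{th}}$ bit. Setting $n(\ell)=\min_{B_k\ni\ell} m(B_k)$ then gives a well-defined integer in $\{3,\ldots,k\}$, attained because we are minimizing a positive integer over a finite collection (since $|\mathcal{L}|<\infty$, there are only finitely many candidate $k$-bridges).

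Finally, for (ii) and (iii): every $\ell\in C_l^{(k)}$ admits at least one $k$-bridge by the very definition of being a $k$-loop, so $n(\ell)$ exists and $\ell\in C_l^{(n(\ell),k)}$, giving coverage. Uniqueness of $n(\ell)$ as a minimum immediately yields $C_l^{(n,k)}\cap C_l^{(n',k)}=\emptyset$ for $n\ne n'$, giving disjointness. The main obstacle, as I see it, is just careful bookkeeping around the ``minimally'' clause in \eqref{eq:clnk}, not genuinely new topology: once Lemma \ref{lem:single loop} is in hand, the corollary amounts to a combinatorial refinement of the stratification already introduced in \eqref{eq:bridgedecom}.
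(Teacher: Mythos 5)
Your proposal is correct and follows essentially the same route as the paper: both rest on Lemma \ref{lem:single loop} to isolate the unique cut loop of each $k$-bridge, then count the distinct $W(x)$ sets meeting the bridge (one with $x_l=0$, at least one each with $x_l=\pm1$, at most one per loop) to obtain the bounds $3\le n\le k$ and stratify by the minimum over associated bridges. The paper presents this only as an informal paragraph preceding the corollary, so your explicit verification of well-definedness, coverage, and disjointness is a faithful, somewhat more careful rendering of the same argument.
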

}

Notice that $C_l^{(n,k)}$ is nontrivial only if $n,k\ge3$, for otherwise the conditions that define it are impossible to satisfy. At the level of bitstrings, we can see that the condition in \eqref{eq:clnk} that $\bigcup_{i=1}^n W(x^i)$ contain a $k$-bridge minimally necessarily means that $\sum_{i=1}^n |x^{i}_l| = n-1$ (i.e., only $x^{1}_l=0$), and that $\abs{\sum_{i=1}^n x^{i}_l} \le n-3$ (i.e., there are both $+1$ and $-1$ bits). While necessary, these bitstrings conditions are not sufficient to guarantee those in \eqref{eq:clnk} -- this issue will reappear when formulating the contraction map. In the contraction-map method, we want to turn a sum over the $k$-tuples of bitstrings satisfying the conditions in \eqref{eq:clnk} into a sum over all $k$-tuples of bitstrings, with some indicator function specifying which tuples do participate in each $C_l^{(n,k)}$; however, we cannot formulate the necessary and sufficient conditions for a $k$-bridge purely in terms of elementary properties of a bitstring. Hence, we will need access to an oracle:
\update{
\begin{ndefi}[Oracular Indicator]
Consider a candidate entropy inequality in the canonical form of \eqref{eq:candidate_ineq} on an $n$-party link model. The \emph{oracular indicator} is a Boolean map on bitstrings, $F^{(n,k)}_l:\{-1,0,1\}^{nL} \to \{0,1\}$, which returns 1 when all of the necessary and sufficient conditions for a $k$-bridge to be minimally contained in $\bigcup_{i=1}^n W(x^i)$ are satisfied, and 0 otherwise.
\end{ndefi}
}
Note that, following the convention of \eqref{eq:clnk}, we may without loss of generality choose to have $F_l^{(n,k)}(x^1,\ldots,x^n) = 0$ when $x^1_{l} \neq 0$. The function also has one additional feature beyond checking the $k$-bridge conditions: it keeps track of what loops have \textit{already} \update{contributed to the cut weight} and may be part of multiple $k$-bridges.\footnote{In the context of the link given in Figure~\ref{fig:link_15}, notice that the weight-2 loop contributes to forming 3- but also 4-bridges, but because of the coloring strategy described below it is only counted once as a 3-loop.} This can be implemented by associating each loop in the cut set with the $k$-bridge of smallest $k$ in which it participates, with an additional ordering for $k$-bridges of the same valence. An example for how such a function could be implemented, assuming access to \texttt{Link-Min-Cut} and \texttt{Bridge-Oracle} is:

\vspace{10pt}
\begin{algorithm}[H]
\label{algo}
\caption{Implementation of $F^{(n,k)}_l$}\vspace{5pt}
\SetAlgoLined
Calculate min-cut $C_l$ for subsystem $I_l$ via \texttt{Link-Min-Cut};\\
Color all loops in $C_l$ gray and set all values of $F^{(n,k)}_l(x^1,\ldots,x^n)$ to zero;\\
$k \coloneqq 3$;\\
\While{$C_l$ contains gray loops}{
 $n \coloneqq 3$;\\
\While{$n \leq k$}{
  \For{each $n$-tuple of bitstrings $(x^1,\ldots,x^n)$ with $x_l^1 = 0$}{\vspace{1pt}
    $\mathbf{W} \coloneqq \bigcup_{i=1}^n W(x^i)$;\\
      \For{each subset of $k$ loops $B \subseteq \mathbf{W}$}{\vspace{1pt}
          \If{{\upshape\texttt{Bridge-Oracle}$(\mathcal{L},I_l,B)$}}{\vspace{1pt}
            $\ell \coloneqq \text{loop in } B \bigcap W(x^1)$;\\
            \If{$\ell$ exists and is gray }{\vspace{1pt}
                Color $\ell$ green;\\
                Set $F^{(n,k)}_l(x^1,\ldots,x^n)=1$;
            }
           }
      }
  }
  $n \coloneqq n+1$;
  }
  $k \coloneqq k+1$;
 }
\end{algorithm}
\vspace{10pt}

To be explicit, the oracular nature of $F_l^{(n,k)}$ manifests purely in determining the sufficient conditions for $B$ to be a $k$-bridge -- the other features are for presentation/convenience, and can be implemented as separate functions on bitstrings.

Finally, we comment on the first condition appearing in the statement of Proposition \ref{prop:generalized contraction}, which is an extra requirement on any possible contraction map $f$, in addition to the contraction property. Recall that for a cut $C_l$ for a region $I_l$ appearing on the LHS of a candidate entropy inequality \eqref{eq:candidate_ineq}, we were able to decompose the cut as in \eqref{eq:reccl}.
In order to relate the weights of these cuts to those on the RHS, we need to perform a similar decomposition of the RHS cuts. However, the caveat is that the decomposition must still be in terms of the $W(x)$ (or else the comparison would be too difficult), and the $W(x)$ are defined only in terms of the regions appearing on the LHS of \eqref{eq:recwl}. That is, given a candidate contraction map $f:\{-1,0,1\}^L \to \{-1,0,1\}^R$, when we construct the corresponding vertex subsets for the RHS as in \eqref{eq:rhscuts}, each such $U_r$ must be a valid cut for the $r^{\text{th}}$ region appearing on the RHS. This means $U_r \cap \partial\mathcal{L} = J_r$ and $U_r$ is a cut interior for some (not necessarily minimal) loop cut $C_r$ for $J_r$. If such an initial condition cannot be satisfied, then the contraction map method has simply failed. That such a function exists is immediate in the hypergraph case, due to the bijective relation between cut vertices and edges. However, it is not clear that the same is true here. We interpret this fact as simply a manifestation of another stricter condition that is required for the contraction-map method to succeed.

\update{
With these tools in hand, we now provide a proof of Proposition \ref{prop:generalized contraction}:
}

\begin{proof}[Proof of Proposition \ref{prop:generalized contraction}]
\update{
Consider a candidate $n$-party entropy inequality in the canonical form of \eqref{eq:candidate_ineq}. Let $C_l$ be the minimal loop cut for the $l^{\text{th}}$ region appearing on the LHS. Then using the decomposition of $C_l$ into $C_l^{(n,k)}$ from Corollary \ref{cor:clnk} and the oracular indicator $F_l^{(n,k)}$, we can write the LHS as a sum over all bitstrings:
\begin{equation}\label{eq:LHS_cut_eqn}
    \sum_{l=1}^L\alpha_l|C_l| = \sum_{l=1}^L\alpha_l\sum_{k=3}^{|\mathcal{L}|}\sum_{n=3}^{k}|C_l^{(n,k)}| = \sum_{l=1}^L\alpha_l\sum_{k=3}^{|\mathcal{L}|}\sum_{(x^1,\ldots,x^n)}^{3\le n \le k}F_l^{(n,k)}(x^1,\ldots,x^n)|W(x^1)|.
\end{equation}
We continue by considering the RHS of \eqref{eq:candidate_ineq}. By the first assumption in the statement of the proposition, there exists a function $f:\{-1,0,1\}^L \to \{-1,0,1\}^R$ such that
\begin{equation}
\label{eq:rhscuts2}
    U_r = \bigcup_{x:f(x)_r=1}W(x),
\end{equation}
is a valid cut for the $r^{\text{th}}$ region appearing on the RHS. Notice that generally there are many implicit conditions $f$ will have to satisfy. In particular, the zeros of $f$ already specify RHS loop cuts, which themselves uniquely specify all RHS cut interiors and exteriors. This implies that all the $\pm 1$ entries of $f$ cannot be chosen freely, but are fixed by the zeroes of $f$. The latter also determine the weights of the RHS cuts $U_r$ given by \eqref{eq:rhscuts}, which can be expressed as follows:
\begin{equation}
   |U_r| = \sum_{k=3}^{|\mathcal{L}|}\sum_{(x^1,\ldots,x^n)}^{3\le n \le k}\sum_{l=1}^{L} \delta(f(x^1)_r = 0) F^{(n,k)}_l(x^1,\ldots,x^n)|W(x^1)|.
\end{equation}
This expression simply counts when loops from the LHS cuts end up contributing to the RHS.
Compared to \eqref{eq:LHS_cut_eqn}, which counts loop weights in the minimum cut for each of the LHS subsystems $I_l$, this expression counts the weight of those loops that contribute to a cut for the RHS subsystem $J_r$ via the assignment specified by $f$. Recall that the $r^{\text{th}}$ bit of $f(x^1)$ equaling $0$ means that $W(x^1)$ is being assigned to be part of the loop cut of $J_r$. Altogether, for the RHS we can write
\begin{equation}
    \sum_{r=1}^R\beta_r|U_r| = 
    \sum_{r=1}^R\beta_r \sum_{k=3}^{|\mathcal{L}|}\sum_{(x^1,\ldots,x^n)}^{3\le n \le k} \sum_{l=1}^{L} \delta(f(x^1)_r = 0) F^{(n,k)}_l(x^1,\ldots,x^n)|W(x^1)|.
\end{equation}
Assuming the cut specified by $f$ as in \eqref{eq:rhscuts} is indeed valid, then this will, by definition, upper bound the true value of the min-cut associated with the RHS. Hence, \eqref{eq:candidate_ineq} would then be satisfied if
\begin{equation}
\begin{aligned}
\sum_{k=3}^{|\mathcal{L}|} \sum_{(x^1,\ldots,x^n)}^{3\le n \le k} &\left[ \sum_{l=1}^L\alpha_l F^{(n,k)}_l(x^1,\ldots,x^n) \right] |W(x^1)| \\
&\geq \sum_{k=3}^{|\mathcal{L}|}\sum_{(x^1,\ldots,x^n)}^{3\le n \le k} \left[ \sum_{r=1}^R\beta_r \delta(f(x^1)_r = 0) \sum_{l=1}^{L} F^{(n,k)}_l(x^1,\ldots,x^n) \right] |W(x^1)|.
\end{aligned}
\end{equation}
A simple sufficient condition for this to hold is to demand that the inequality be satisfied for every $n$-tuple of bitstrings with $3\leq n\leq k$ for each value of $k$. Were this to be the case, then all possible equations of the following form would hold:
\begin{equation}
\label{eq:cp}
    \sum_{l=1}^L\alpha_l F^{(n,k)}_l(x^1,\ldots,x^n) \geq \sum_{r=1}^R\beta_r \delta(f(x^1)_r = 0) \sum_{l=1}^{L} F^{(n,k)}_l(x^1,\ldots,x^n).
\end{equation}
Note that this is the second condition given in the statement of Proposition \ref{prop:generalized contraction}. If the above is true, then the given entropy inequality is also true,
by the minimality property of min-cuts.\footnote{Namely, since by construction for the LHS we have min-cuts with $S(I_l)=|C_l|$, whereas for the RHS we have simply constructed cuts which are not necessarily minimal and thus obey $S(J_r)\le|C_r|$.}
}
\end{proof}

We remark that we obtain a contraction condition which is \textit{cut-dependent}, in the sense that it depends on the oracular indicator $F_l^{(n,k)}$. This cut-dependence of the contraction map is a new feature of link models that was not present in (hyper)graphs, arising from the new topological character of min-cuts.

\update{\subsection{Subadditivity and Strong Subadditivity}\label{subsec:ssa}
Although we have a generalized contraction condition in hand, an important consistency check that link models are relevant for quantum mechanics is whether the two universal quantum inequalities of subadditivity and strong subadditivity hold for topological links models. Indeed, it is known that these two inequalities completely determine the entropy cone of all quantum states for $n \leq 3$ parties.}

\update{It turns out that these properties do indeed easily follow by the definition of min-cuts on links, without requiring the contraction-map technology. In the case of subadditivity, $S(A)+S(B)\geq S(AB)$, the cuts that separate $A$ from its complement and the cuts that separate $B$ from its complement will by definition separate $AB$ from its complement, and not necessarily optimally. This proves subadditivity. For strong subadditivity, $S(AB)+S(BC)\ge S(B)+S(ABC)$, the argument is much the same: the loop cuts that separate $AB$ from its complement and $BC$ from its complement can also be used to separate $B$ from its complement and $ABC$ from its complement. In addition, the subsets of loops of the $AB$ and $BC$ cuts that get recycled to build loop cuts for each $B$ and $ABC$ can be chosen to be disjoint, thus preventing double-counting on the RHS without accompanying double-counting on the LHS. The converse inequality is clearly false: loops in cuts for $B$ and $ABC$ do not always allow one to build a loop cut for e.g. $AB$, as in particular $A$ can still be linked with $C$. This type of reasoning is very much the same one involved in the proof by contraction.}

\update{The above arguments show that the entropy cone of link models is contained within the $n$-party quantum cone for $n\leq3$. We could show containment in the other direction by exhibiting links that realize the extreme rays of the corresponding cones. This is immediate, since any hypergraph entropies can be realized by links as explained in Figure~\ref{fig:link_conversion}, and it is known that hypergraphs realize all such extreme rays \cite{Bao:2020zgx}. We hence find that entropy cone of link models is equivalent to the quantum entropy cone for $n \leq 3$ parties.}

\subsection{Potential Alternative Formulation: Super-Loops}
Here, we briefly discuss a more natural generalization of the contraction-map method to loops, in the sense that it better captures how (hyper)graphs work. We are unable to prove the statement, so we leave it as a conjecture.

An intuitive way to understand the $W(x)$ sets in the graph case is as ``super-vertices'' -- because one is only interested in the edges that enter and leave $W(x)$, the connections within a given $W(x)$ are unimportant, and one can think of each $W(x)$ as simply being a large vertex connected to other super-vertices by appropriate edges.  Indeed, it is precisely this intuition that is used in the proofs of polyhedrality of the graph and hypergraph entropy cones~\cite{Bao:2015bfa,Bao:2020zgx}. 

One would like to generalize this to the loop case, letting $W(x)$ denote a ``super-loop'', and redefining the notion of a $k$-bridge to correspond to a set of $k$-many $W(x)$ sets. This would immediately simplify the proof and notation of the contraction condition above: we would no longer need to separately sum over $n$-many $W(x)$ sets to find all possible $k$-bridges. Instead, we would simply have to sum over all $k$-many subsets of the $W(x)$ sets directly.

Unfortunately, the barrier to doing so lies in, once again, the fact that there is no longer a unique way to connect loops. The key in applying the super-vertex method is that cut entropies on the new graph involving super-vertices are identical to those on the old graph. This is immediate on hypergraphs due to the simplicity of hyperedges. However, it is not clear whether the same is true for loop models. The freedom and variety of ways to connect loop sets together means that the preservation of cut entropies in the new super-loop model is not obviously guaranteed. We leave it as an open question whether such a generalization is possible.

\section{Discussion}\label{sec:disc}
In this work, we extend the program of building models of multipartite entanglement based on min-cuts initiated in~\cite{Bao:2015bfa} for graphs and in~\cite{Bao:2020mqq, Bao:2020zgx, Walter:2020zvt} for hypergraphs to topological links. In particular, we motivate a specific topological generalization of the min-cut to a model of entanglement involving links where the entanglement entropy of subsystems is given by weights of linked loops. We have demonstrated the usefulness of this definition by showing that link models allow for the construction of entropy vectors which violate known inequalities for hypergraph states-- see, for example, the link model in Figure \ref{fig:link_15}. Additionally, we generalized the contraction-map method for proving inequalities, although this required the inclusion of certain oracular functions.
The inherent difficulty of working with the topological properties of these link models of entanglement leaves many open directions for future work.

\update{Firstly}, a more direct version of the contraction-map method with less reliance on oracles would be desirable in order to turn the proof method proposed here into a combinatorial one that can be implemented by computer. Likely this would involve formulating the necessary conditions in a cut-independent way. Alternatively, were this not to be possible, it would be interesting to prove that oracles are necessary and thus that this method just cannot be easily turned into a combinatorial one.

\update{Relatedly, while we are able to show that link models reach beyond the hypergraph cone, we are unable to obtain insight into the geometric structure of the link cone, e.g. we do not know if the entropy cone of link models is polyhedral. It is possible that insights into the full structure of the cone may enable a more efficient or direct proof approach. We remark that lack of polyhedrality is not necessarily an issue from a conceptual standpoint: it is unknown if the quantum entropy cone is polyhedral for $n > 3$ parties, and it may well be that it is not.}

\update{Next}, a heuristic strategy of constructing more ``interesting'' topological link models that are provably not realizable by hypergraphs would be desirable. This would require a more complete understanding of how to produce general topological links whose entropies, like those of the link in Figure \ref{fig:link_15}, go beyond the limits of the hypergraph paradigm.
Ultimately, one would hope to find a method for, given an entropy vector, determining whether there exists a link model realizing it and, if so, which form it takes. This would precisely correspond to a generalization of the powerful methods developed in~\cite{Avis:2021xnz} for graphs (also used in~\cite{Bao:2020zgx,Bao:2020mqq} for hypergraphs) via integer linear programs. For link models, it is possible that nascent machine learning methods in topological link/knot classification \`a la those of~\cite{Gukov:2020qaj} will be useful in heuristically generating these nontrivial link structures.

\update{Moreover, even if there does exist a link model realizing an entropy vector, it may not be the only one -- there may be multiple inequivalent link representations of a given entropy vector. This is similar to (hyper)graph models, where it is also possible to find arbitrarily many inequivalent graph representations of a given entropy vector. Hence, we do not believe that this non-uniqueness impacts the comparison of the link model entropy cone with the stabilizer and quantum entropy cones. It would, of course, be interesting to find ``minimal'' links for each entropy vector, analogous to the notion of minimality for graph representatives of entropy vectors defined by \cite{Avis:2021xnz}. Rigorously defining a sharp notion of minimality on links would certainly be useful for distilling which properties of links characterize entanglement structures unreachable by hypergraphs. Along similar lines, it would be interesting to determine what, if any, implications the extra information contained in inequivalent representations of the same entropy vector has for physical quantum states.}

\update{Finally, and perhaps most importantly, we would like to further strengthen the connections between topological link models and physical quantum state entropies. As argued above, topological link models obey the only two inequalities known to hold on arbitrary quantum states: subadditivity and strong subadditivity (and symmetry permutations thereof). Though necessary as conditions for topological link models to capture precisely the entropies of quantum states, these are not sufficient. The following logically independent questions remain:
\begin{itemize}
    \item \emph{Do topological link models obey all other quantum inequalities?} If the answer to this were to be affirmative, its proof would be of paramount importance, and correspondingly a very ambitious challenge. The reason for this is that other quantum inequalities are strongly believed to exists, but none are actually known! Obtaining an inequality valid on link models through the machinery presented in this work would shed very exciting light on this. If instead the answer were to be negative, the finding of a counterexample would be extremely interesting: it would provide a set of link entropies consistent with subadditivity and strong subadditivity, yet inconsistent with quantum mechanics, thereby pointing at the existence of novel quantum entropy inequalities that rule out such entropies.
    \item \emph{Do topological link models reproduce all quantum entropies?} If true, a proof of this in the affirmative would require either an explicit construction of link models for given quantum entropy vectors, or a proof of nonexistence of valid link inequalities stronger than quantum ones, both of which seem beyond the reach of current tools. If false, however, it would only take the proof of some inequality on links that does not hold quantum mechanically, which may be more achievable using the tools in this paper. 
\end{itemize}
As can be seen, the landscape of possibilities is vast and worth exploring. A YY/YN/NY/NN answer to the questions above would imply that the link entropy cone is equal to/a proper subset of/a proper superset of/incomparable to the quantum entropy cone. While the YY option would be particularly tantalizing, we believe that any of them would be of interest in advancing our tools for classifying multipartite quantum entanglement. If YN, link models would continue the program started in the context of holography in the obvious way: the most basic patterns of multipartite entanglement would be those realizable by graphs, then those realizable by hypergraphs but not graphs, then those realizable by topological links but not (hyper)graphs, etc. In fact, if NN, the same classification would still apply, since as explained around Figure~\ref{fig:link_conversion} all hypergraph entropies are realizable by links. Of course, in this latter NN case there would however be ``unphysical'' link models realizing entropies which are not quantum mechanical. One may nonetheless expect that some structural constraint on link models could be implemented to avoid this, potentially bringing one back to the interesting YN case. In the NY case, an analogous constraint would also be very interesting, as it could return one to the ideal YY case.
}

\update{
Let us emphasize why such a classification of multipartite entanglement into graphs, hypergraphs, links, etc. is particularly exciting. Graph models of entanglement originated naturally in the context of holography \cite{Bao:2015bfa}, where the entanglement of boundary quantum states is geometrized in the bulk in a rather explicit way. This geometrization is possible precisely because of the local character of quantum correlations in holographic states. We may thus characterize quantum states having entropies realizable by graphs as possessing only \emph{locally} entangled subsystems. Hypergraph models break this notion of locality by allowing for $k$-local correlations, implemented in the form of $k$-edges~\cite{Bao:2020zgx}. As a result, hypergraphs are further capturing quantum states which may also involve subsystems which have \emph{non-locally} entanglement in a multi-local sense. The additional extension to links brings us quite naturally to another layer of complexity in the structural form of entanglement where multi-locality is superseded by a more global notion of \emph{topological} entanglement. We thus see link models as an appealing extension of the classification of multipartite entanglement initiated in the context of holography in terms of graphs and followed by hypergraphs.
}


Given the promise demonstrated by the topological link model in going beyond the entropies describable by hypergraphs, and also given the large range of open questions for future exploration in this direction, we anticipate that this will be an open area of exploration in the area of multipartite entanglement for years to come.

\section*{Acknowledgements}
We would like thank Aaron Elersich, Greg Kuperberg, Jonathan Harper, Nils Baas, and Matthew Headrick for useful discussions. 
For parts of the completion of this work, N.B. was supported by the Department of Energy under grant number DE-SC0019380, and is supported by the Computational Science Initiative at Brookhaven National Laboratory, and by the U.S. Department of Energy QuantISED  Quantum Telescope award. 
N.C. and V.P.S. were supported in part by the National Science Foundation (award number PHY-1521446), and by the U.S. Department of Energy under contract DE-AC02-05CH11231 and award DE-SC0019380. 
V.P.S. gratefully acknowledges support by the NSF Graduate Research Fellowship Program under Grant No. DGE 1752814.
During the completion of this work, S.H.C. was supported by NSF grant PHY-1801805 and by funds from the University of California Santa Barbara, and is supported by NSF grant PHY-2107939 and by a Len DeBenedictis Graduate Fellowship.

\ignore{
Hence we can write
\begin{equation}
    C_l^{(k)} = \left\{ \ell\in\mathcal{L} \,:\, \mathcal{B}_l(x^{(1)},\dots,x^{(k)}) \right\}_{x^{(1)},\dots,x^{(k)}},
\end{equation}
where $\mathcal{B}_l$ is some Boolean oracle function that we wish to break down further. Letting $B'=\bigcup_{i=1}^{k} W(x^{(i)})$, the oracle function is asked to yield
\begin{equation}
    \mathcal{B}_l(x^{(1)},\dots,x^{(k)}) = 
    \begin{cases}
        \text{True} &\text{if $B'$ is a $k$-bridge for $I_l$,} \\
        \text{False} \quad &\text{otherwise.}
    \end{cases}
\end{equation}
Notice that one has $B'\cap C_l=\{\ell\}$ whenever the oracle yields True. An obvious necessary condition for $\mathcal{B}_l$ to yield true is that, among the bitstrings $k$ bitstrings $x^{(i)}$ going into its argument, at least one and at most $k-2$ have bit $x^{(i)}_l=1$. Respectively, this is because the bridge must intersect $W_l$, but must also have at least $2$ loops outside so as to intersect both $C_l$ and $\overline{W}_l$.}

\bibliographystyle{JHEP}
\bibliography{references}

\end{document}